\theoremstyle{definition}
\newtheorem{definition}{Definition}
\newtheorem{theorem}{Theorem}
\newtheorem{lemma}{Lemma}
\newtheorem{assumption}{Assumption}
\newtheorem{corollary}{Corollary}
\newtheorem{problem}{Problem}
\newtheorem{property}{Property}
\title{\LARGE \bf
Beyond Reynolds: A Constraint-Driven Approach to Cluster Flocking
}
\author{Logan E. Beaver, \textit{Student Member, IEEE}, Andreas A. Malikopoulos, \textit{Senior Member, IEEE}
	\thanks{This research was supported in part by ARPAE's NEXTCAR program under the award number DE-AR0000796 and by the Delaware Energy Institute (DEI).}
	\thanks{The authors are with the Department of Mechanical Engineering at the University of Delaware in Newark, DE 19716, USA
		{ (emails: \texttt{lebeaver@udel.edu}; \texttt{andreas@udel.edu)}} }%
}
\begin{document}

\maketitle

\begin{abstract}
In this paper, we present an original set of flocking rules using an ecologically-inspired paradigm for control of multi-robot systems.
We translate these rules into a constraint-driven optimal control problem where the agents minimize energy consumption subject to safety and task constraints.
We prove several properties about the feasible space of the optimal control problem and show that velocity consensus is an optimal solution.
We also motivate the inclusion of slack variables in constraint-driven problems when the global state is only partially observable by each agent.
Finally, we analyze the case where the communication topology is fixed and connected, and prove that our proposed flocking rules achieve velocity consensus.
\end{abstract}

\section{Introduction}

Robotic swarm systems have attracted considerable attention in many applications, such as transportation \cite{Malikopoulos2018}, construction \cite{Lindsey2012ConstructionTeams}, and surveillance\cite{Corts2009}.
Controlling emergent flocking behavior has been of particular interest to several researchers since the seminal paper by Reynolds \cite{Reynolds1987}, which introduced three heuristic rules for flocking: flock centering, collision avoidance, and velocity matching.
Flocking has many practical applications, such as mobile sensing networks, coordinated delivery, reconnaissance, and surveillance  \cite{Olfati-Saber2006FlockingTheory}.
 
This paper explores the emergent phenomenon of flocking through a constraint-driven optimal control framework. Since we apply an ecologically inspired robotics approach \cite{Egerstedt2018RobotAutonomy}, it is important to distinguish between the two modes of flocking, as described by Bajec and Heppner \cite{Bajec2009OrganizedBirds}.
This work focuses on \emph{cluster} flocking, which describes the bulk motion of small birds, such as sparrows and pigeons.
This is in contrast to \emph{line} flocking, which describes the movement of geese and other large birds.
The hypothesis made by the ecology community is that sensor fusion is the main benefit of cluster flocking \cite{Thiollay1998FlockingHypothesis}. It has been proposed that cluster flocking allows individuals to estimate the total flock size in order to regulate breeding \cite{Bajec2009OrganizedBirds}. Cluster flocking could have value as a localization technique in some engineering applications.

One open question in the ecology literature \cite{Bajec2009OrganizedBirds} is whether a complex systems approach \cite{Malikopoulos2016b} is the proper way to model flocking.
In this paper, we devise a set of local flocking rules and translate them into a constraint-driven optimal control problem.
We then show several properties of our model and prove that the desired flocking behavior emerges in the case that the communication and sensing topology is fixed.
The optimal control problem is similar to previous work in constraint-driven flocking \cite{Beaver2020AnFlocking, Ibuki2020Optimization-BasedBodies}, where the authors implemented Reynolds flocking behavior. However, in this paper,  we propose an original set of flocking rules under the constraint-driven paradigm.
The optimal control policy, which emerges from our proposed rules, is related to existing constraint-driven optimization approaches using control barrier functions \cite{Notomista2019Constraint-DrivenSystems,Egerstedt2018RobotAutonomy}. In addition, we allow agents to plan over a horizon rather than just reacting to the environment and other agents.
Our approach also explicitly allows for the prioritization of safety constraints over the flocking behavior, which has only been recently explored in control barrier approaches \cite{Lindemann2019ControlTasks}.

The main contributions of this paper are as follows: (1) we present an original constraint-driven model for cluster flocking that is explicitly energy-minimizing and does not suffer from several problems present in potential field methods, (2) we discuss the challenges of solving constraint-driven problems over a planning horizon as opposed to reactive methods, and (3) we provide a proof of convergence to velocity consensus under a fixed network topology.

The remainder of this paper is organized as follows. In Section \ref{sec:flocking}, we discuss the flocking rules proposed by Reynolds and some of their shortcomings. In Section \ref{sec:problem}, we formulate the cluster flocking problem. In Section \ref{sec:solution}, we present several properties of our proposed model and provide guarantees for convergence to flocking for a special case. Finally, 
in Section \ref{sec:conclusion}, we provide concluding remarks and some directions for future work.

\section{Reynolds Flocking} \label{sec:flocking}

A significant amount of research on designing flocking behavior for boids (bird-oids) in multi-agent systems is based on the seminal work by Reynolds \cite{Reynolds1987}. To achieve flocking, Reynolds proposed three heuristic rules to govern the behavior of individual boids: 
(1) collision avoidance (avoid collisions with nearby flockmates),
(2) velocity matching (attempt to match velocity with nearby flockmates),  and 
(3) flock centering (attempt to stay close to nearby flockmates).
These rules have been the basis for most flocking controllers in the literature. Generally, these rules are imposed by coupling an artificial potential field with velocity consensus.A rigorous analysis on the convergence properties of potential-driven Reynolds flocking was presented in \cite{Tanner2007}.
Reynolds suggested two additional rules to guarantee safety and control of the flock \cite{Reynolds1987}: steer to avoid environmental hazards and move toward a ``migratory urge'' location specified by the designer.
Reynolds noted that using an artificial potential field for obstacle avoidance is not realistic, as it tends to push boids perpendicular to their direction of motion. 
Several extensions inherent to potential field problems are explored in \cite{Koren1991PotentialNavigation}, including local minima which trap agents and steady oscillations appearing in the agent trajectories. 
The design of an artificial potential field for the individual boids is still an open question. As discussed by V\'as\'arhelyi et al. \cite{Vasarhelyi2018OptimizedEnvironments}, the design of an optimal potential field is unintuitive and the effect of system parameters on the flock behavior can not be easily predicted. 

In a recent paper, the idea of constraint-driven (or ecologically-inspired) optimization to the control of multi-robot systems was presented \cite{Notomista2019Constraint-DrivenSystems}.
Under constraint-driven control approaches, agents seek to minimize energy consumption subject to a set of constraints. Thus, an agent's behavior can be understood from the constraints that become active during operation.
In the next section, we propose constraint-driven rules to achieve flocking and translate those rules into an optimal control problem.


\section{Problem Formulation} \label{sec:problem}

Consider a flock of  $N\in\mathbb{N}$ boids indexed by the set $\mathcal{A} = \{ 1, 2, \dots, N\}$. Each boid $i\in\mathcal{A}$ follows double integrator dynamics,
\begin{align}
	\dot{\mathbf{p}}_i(t) &= \mathbf{v}_i(t), \label{eq:pDynamics} \\
	\dot{\mathbf{v}}_i(t) &= \mathbf{u}_i(t), \label{eq:vDynamics}
\end{align}
where $t\in\mathbb{R}_{\geq0}$ is time, and $\mathbf{p}_i(t), \mathbf{v}_i(t), \mathbf{u}_i(t) \in \mathbb{R}^2$ are the position, velocity, and control input for boid $i$, respectively. Thus, the state of any boid $i\in\mathcal{A}$ is given by $\mathbf{x}_i(t) = [\mathbf{p}_i(t)  ~\mathbf{v}_i(t)]^T.$
The speed and control input for each boid are constrained such that
\begin{align}
    ||\mathbf{v}_i(t)|| \leq v_i^\text{max}, \label{eq:vConstraint}\\
    ||\mathbf{u}_i(t)|| \leq u_i^\text{max}. \label{eq:uConstraint}
\end{align}
	
For any pair of boids $i,j\in\mathcal{A}$, the relative displacement between them is described by the vector
\begin{equation} \label{eq:s}
	\mathbf{s}_{ij}(t) = \mathbf{p}_j(t) - \mathbf{p}_i(t).
\end{equation}

Each boid interacts with the flock through its neighborhood, defined next.

\begin{definition}[Neighborhood] \label{def:neighborhood}
	We define the \emph{neighborhood} of each boid $i\in\mathcal{A}$, denoted $\mathcal{N}_i(t)$, as the set of $k$-nearest neighbors of boid $i$ at time $t\in\mathbb{R}_{\geq0}$, where $k\in\mathbb{N}$ and $i\not\in\mathcal{N}_i(t)$.
\end{definition}
	
The neighborhood of each boid $i\in\mathcal{A}$ may switch over time, and  $i$ can communicate with, and sense, any boid $j\in\mathcal{N}_i(t)$.
Observations by Ballerini et al. \cite{Ballerini2008InteractionStudy} provide strong evidence that natural flocks are formed following a $k$-nearest neighbors rule. Cristiani, Frasca, and Piccoli \cite{Cristiani2011EffectsGroups} indicate that application of $k$-nearest neighbors ``$\ldots$ does not intend to imply that animals sensing capabilities extend to an unlimited range, but rather that group dynamics happen in a relatively small area.'' Following this rationale, the neighborhood, as defined in Definition \ref{def:neighborhood}, has several advantages: (1) it is distance agnostic, i.e., there can be no isolated boids that escape the flock, (2) it is a constant size, so the smallest connected group of boids is at least size $k$, and (3) it only considers a constant-size subset of nearby boids which implies that the information required by each boid is independent of the total flock size. 

Finally, we use the following model for the rate of energy consumption by any boid $i\in\mathcal{A}$,
\begin{equation} \label{eq:energyModel}
    \dot{e}_i(t) = ||\mathbf{u}_i(t)||^2,
\end{equation}
i.e., the rate of energy consumption is proportional to the $L^2$ norm of the control input. This is a reasonable model for our very general boid, as energy consumption is monotonic with the control input in many real systems.

Given the modeling framework above, we propose the following constraint-driven flocking rules: (1) \emph{energy minimization} to drive energy consumption to a minimum, (2) \emph{collision avoidance} to avoid collision with any flockmates, and (3) \emph{aggregation} to stay within a fixed distance of the average position of nearby flockmates.
Following these rules, the boids spend only the minimum energy necessary to ensure that safety and aggregation are achieved. It is expected that the combination of energy minimization and aggregation will eventually move all boids in the same direction as their neighborhood.
Energy-minimized obstacle avoidance can also be explicitly captured as a constraint in this framework.
The first objective, e.g., energy minimization, can be ensured by formulating each boid's behavior as an optimal control problem with the cost given by \eqref{eq:energyModel}. The second objective, e.g., collision avoidance, can be guaranteed between boids with the following safety constraint,
\begin{equation}\label{eq:safety}
	||\mathbf{s}_{ij}(t)||^2 \geq 4R^2, \quad\forall j\in\mathcal{A},\quad\forall\,t\in\mathbb{R}_{\geq 0},
\end{equation}
where $R\in\mathbb{R}_{>0}$ is the radius of each boid. The squared form in \eqref{eq:safety} is used to ensure that the derivative is smooth at $||\mathbf{s}_{ij}(t)|| = 2R$.
Additional constraints can be formulated to avoid static and dynamic obstacles in the environment \cite{Rezaee2014,VanDenBerg2011ReciprocalObstacles}.
Finally, we denote the center of boid $i\in\mathcal{A}$'s neighborhood as
\begin{equation} \label{eq:neighborhoodCtr}
	\mathbf{c}_i(t) = \frac{1}{k} \sum_{j\in\mathcal{N}_i(t)} \mathbf{p}_i(t),
\end{equation}
where $k$ corresponds to the $k$-nearest neighbors. We enforce aggregation through the \emph{flocking constraint},
\begin{equation} \label{eq:flocking}
    g_i(t) = ||\mathbf{p}_i(t) - \mathbf{c}_i(t)||^2 - D^2 \leq 0, \quad \forall t\in\mathbb{R}_{\geq0},
\end{equation}
where $D\in\mathbb{R}_{>0}$ is the system parameter corresponding to the flocking radius. Again, we use the equivalent squared form of the constraint to guarantee smoothness of the derivative.
In our decentralized optimal control problem, \eqref{eq:safety} is considered our \emph{safety constraint}, while \eqref{eq:flocking} is the \emph{task constraint}. 
In case where no feasible trajectory can be found, then we may relax the task constraint by applying least-penetrating control techniques \cite{Lindemann2019ControlTasks} or introducing slack variables \cite{Ibuki2020Optimization-BasedBodies}.
	
The standard solution methodology in the constraint-driven literature is to encapsulate the task and safety constraints within a control barrier function \cite{Notomista2019Constraint-DrivenSystems,Egerstedt2018RobotAutonomy,Ibuki2020Optimization-BasedBodies}. The boids will then apply gradient flow to drive energy consumption to a stationary point while satisfying the constraints \cite{Egerstedt2018RobotAutonomy}.

However, this approach is reactive and does not explicitly allow for planning or cooperation between the boids. In this paper, we seek to plan a trajectory over some finite-time horizon, $[t_i^0, t_i^f] \subset \mathbb{R}_{\geq0},$ for each boid $i\in\mathcal{A}$.
Next, we formulate the problem.
\begin{problem} \label{prb:flocking}
	For each boid $i\in\mathcal{A}$, consider the decentralized energy-minimization problem over the horizon $[t_i^0, t_i^f]\subset\mathbb{R}_{\geq 0}$,
	\begin{gather}
		\min_{\mathbf{u}_i(t)} \, \frac{1}{2} \int_{t_i^0}^{t_i^f} ||\mathbf{u}_i(t)||^2 \, + \alpha_i\, \eta_i^2(t) \, dt,\\\notag
	    \text{s.t.:}~ \mathbf{x}_i(t_i^0) = \mathbf{x}_i^0,
	    \eqref{eq:pDynamics}, \eqref{eq:vDynamics}, \eqref{eq:vConstraint}, \eqref{eq:uConstraint}, \eqref{eq:safety},
        \text{and}~ g_i(t) \leq \eta_i^2(t),
	\end{gather}
	where $\mathbf{x}_i^0$ is the initial state of boid $i$, $g_i(t)$ is the flocking constraint given by \eqref{eq:flocking}, $\eta_i^2(t)$ is a a slack variable that allows safety to be prioritized over flocking, and $\alpha_i$ is a system parameter to weight energy consumption versus the need to satisfy the flocking constraint.
\end{problem}

In contrast to Reynolds' flocking rules, in our  formulation in Problem \ref{prb:flocking}, we do not impose a desired inter-boid spacing. Thus our system will not converge to the $\alpha$-lattice formation, which is the optimal solution to minimizing a potential field \cite{Olfati-Saber2006FlockingTheory}. Additionally, our system does not require velocity alignment, and it instead emerges naturally as a solution to Problem \ref{prb:flocking}.
For the solution of Problem \ref{prb:flocking}, we impose the following assumptions.

\begin{assumption}\label{as:empty}
    There are no external disturbances or obstacles.
\end{assumption}

We impose Assumption \ref{as:empty} to evaluate the idealized performance of the proposed algorithm. It is well known that optimal control can be fragile with respect to noise and disturbances, and this assumption may be relaxed by applying robust optimal control.
    
\begin{assumption}\label{as:perfect}
    There are no errors or delays with respect to communication and sensing.
\end{assumption}
The strength of Assumption \ref{as:perfect} is application dependent. In general, it has been shown that sparse updates to re-plan trajectories are sufficient for this type of problem \cite{Hu2018Self-triggeredSystems}. However, these delays may become significant for fast-moving flocks in constrained environments.
    
\begin{assumption}\label{as:lowDensity}
    Each boid has a low-level onboard controller that can track the prescribed optimal trajectory.
\end{assumption}
Assumption \ref{as:lowDensity} may be strong for certain applications. This assumption may be relaxed by including kinematic constraints on the motion of each boid, or by considering more complicated dynamics in Problem \ref{prb:flocking}.

\subsection{The Information Problem}

The formulation of Problem \ref{prb:flocking} has two distinct issues that repeatedly occur in the decentralized control literature.
The first occurs when a boid only makes partial observations of the entire flock, i.e., $|\mathcal{N}_i(t)| < N - 1$ for $i \in\mathcal{A}$.
This results in a non-classical information structure, and results from centralized control do not generally apply \cite{Dave2019a}.
This problem may be circumvented by sharing information throughout the network \cite{Morgan2016}; however, this is prohibitively expensive for large flocks of boids.
This problem may also be solved by employing event-triggered control, where each boid re-solves  Problem \ref{prb:flocking} whenever it receives new information \cite{Beaver2019AGeneration}. This method only requires local information that is readily available to each boid.

The second issue is a potential simultaneous action of the boids. Consider two boids $i,j\in\mathcal{A}$ which seek a solution to Problem \ref{prb:flocking} at $t = t_i^0 = t_j^0$.
Boid $i$ requires the trajectory of boid $j$ to calculate $\mathbf{c}_i(t)$, while boid $j$ requires the trajectory of boid $i$ to calculate $\mathbf{c}_j(t)$.
This coupling guarantees that boids $i$ and $j$ can never satisfy the constraint \eqref{eq:flocking}, as its value is not known until after a trajectory has been generated.
The most straightforward solution to this problem is to impose some priority ordering on the boids, then to have them solve for their trajectories sequentially \cite{Beaver2019AGeneration,Turpin,Beaver2020DemonstrationCity,chalaki2019optimal}. Another approach is to allow the boids to make decisions asynchronously, which implicitly imposes an order \cite{Lin2004a} based on the boids' hardware specifications.

A standard approach to resolving the information problem in multi-agent systems is to apply decentralized model predictive control \cite{Luis2019TrajectoryControl}. In this case, each boid solves Problem \ref{prb:flocking}, the it follows the prescribed trajectory for some period of time, and then it re-solves Problem \ref{prb:flocking}. The trajectories  generated at the previous time step are used as estimates for the new trajectories generated by each neighbor.
This processes addresses both the simultaneous action and partial observation information problems.
The derivation of a model predictive controller for Problem \ref{prb:flocking} is beyond the scope of this paper. However, recent approaches by Zhan and Li \cite{Zhan2013FlockingMeasurements} and Lyu et al. \cite{Lyu2019MultivehicleControl} for constrained distributed model predictive control of flocking systems mat be adapted to Problem \ref{prb:flocking}.



\section{Optimal Solution Properties} \label{sec:solution}

Next, we present several properties of the system described by Problem \ref{prb:flocking}. 
First we examine the discontinuities of the system which are imposed by neighborhood switches through the following Lemmas.

\begin{lemma} \label{lma:cContinuous}
    For each boid $i\in\mathcal{A}$, if the functions $\mathbf{c}_i(t)$ or $\dot{\mathbf{c}}_i(t)$ are discontinuous at a time $t_1\in\mathbb{R}_{\geq 0}$, then the neighborhood $\mathcal{N}_i(t)$ must switch at $t=t_1$.
\end{lemma}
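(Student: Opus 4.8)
The plan is to argue by contraposition: rather than reasoning directly from a discontinuity, I would show that if the neighborhood $\mathcal{N}_i(t)$ does \emph{not} switch at $t_1$, then both $\mathbf{c}_i(t)$ and $\dot{\mathbf{c}}_i(t)$ are continuous there. Concretely, ``$\mathcal{N}_i$ does not switch at $t_1$'' means there exists $\delta > 0$ such that $\mathcal{N}_i(t)$ equals a single fixed index set $\mathcal{N}$, with $|\mathcal{N}| = k$, for every $t \in (t_1 - \delta, t_1 + \delta)$. The contrapositive of the lemma is then exactly the statement that this local constancy forces continuity of both functions at $t_1$.

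Next I would invoke the regularity of the individual boid trajectories. Each position $\mathbf{p}_j(t)$ is obtained by integrating the velocity through \eqref{eq:pDynamics}, and each velocity $\mathbf{v}_j(t)$ by integrating the control through \eqref{eq:vDynamics}. Since the control is bounded by \eqref{eq:uConstraint}, each $\mathbf{v}_j$ is Lipschitz, hence continuous, and each $\mathbf{p}_j$ is therefore continuously differentiable; in particular, the only mechanism by which $\mathbf{c}_i$ or $\dot{\mathbf{c}}_i$ can jump is a change in the \emph{index set} over which the average is taken, not a jump in any individual trajectory. Restricting to the interval where $\mathcal{N}_i(t) = \mathcal{N}$, the definition \eqref{eq:neighborhoodCtr} collapses to the finite sum $\mathbf{c}_i(t) = \frac{1}{k}\sum_{j\in\mathcal{N}} \mathbf{p}_j(t)$, and differentiating term by term yields $\dot{\mathbf{c}}_i(t) = \frac{1}{k}\sum_{j\in\mathcal{N}} \mathbf{v}_j(t)$.

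Because a finite sum of continuous functions is continuous, $\mathbf{c}_i$ inherits continuity from the $\mathbf{p}_j$ and $\dot{\mathbf{c}}_i$ inherits continuity from the $\mathbf{v}_j$ on $(t_1 - \delta, t_1 + \delta)$; in particular both are continuous at $t_1$. This contradicts the assumed discontinuity of $\mathbf{c}_i$ or $\dot{\mathbf{c}}_i$, so the neighborhood must switch at $t_1$, which establishes the lemma.

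The main obstacle I anticipate is not the analysis but the bookkeeping of the definition of a neighborhood ``switch.'' I would need to state precisely that holding $\mathcal{N}_i(t)$ fixed on an open interval around $t_1$ is the exact negation of switching at $t_1$, and to confirm that the $k$-nearest-neighbors set is well defined (no ties, or ties broken by a fixed rule) so that $\mathcal{N}$ is genuinely constant rather than ambiguously defined on the interval. Once that definition is pinned down, the continuity argument is routine, resting entirely on the smoothness of the double-integrator trajectories guaranteed by \eqref{eq:pDynamics}--\eqref{eq:uConstraint}.
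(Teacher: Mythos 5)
Your proposal is correct and follows essentially the same route as the paper's own proof: a contrapositive argument showing that when $\mathcal{N}_i(t)$ is locally constant, $\mathbf{c}_i(t)$ (and likewise $\dot{\mathbf{c}}_i(t)$) is a finite sum of continuous trajectories and hence continuous, so discontinuities can only occur at switching instants. Your version merely adds detail the paper leaves implicit, namely the regularity of each $\mathbf{p}_j$, $\mathbf{v}_j$ from the double-integrator dynamics and the precise negation of ``switching at $t_1$.''
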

	
\begin{proof}
    Let $\mathcal{N}_i(t)$ be constant over some interval $[t_1, t_2]\subset\mathbb{R}_{\geq0}$.
    By Definition \ref{def:neighborhood}, $\mathbf{c}_i(t)$ is a sum of $k\in\mathbb{N}$ continuous functions, hence $\mathbf{c}_i(t)$ is continuous for all $t\in[t_1, t_2]$.
    Thus, $\mathbf{c}_i(t)$ may only be discontinuous at instants when $\mathcal{N}_i(t)$ is switching.
    The proof for $\dot{\mathbf{c}}_i(t)$ is identical, and thus it is omitted.
\end{proof}

\begin{definition} \label{def:symmetry}
    For each boid $i\in\mathcal{A}$, let its neighborhood $\mathcal{N}_i(t)$ switch at some time $t_1\in\mathbb{R}_{\geq0}$.
    Let $\mathcal{O} = \mathcal{N}_i(t_1^-)\setminus\mathcal{N}_i(t_1^+)$ and $\mathcal{P} = \mathcal{N}_i(t_1^+)\setminus \mathcal{N}_i(t_1^-)$ be the boids which were removed and added to the set $\mathcal{N}_i(t)$ at time $t_1$, respectively.
    Then, we define a switch as \emph{symmetric} if it satisfies
    $\sum_{j\in\mathcal{O}} \mathbf{p}_j(t_1) = \sum_{j\in\mathcal{P}} \mathbf{p}_j(t_1)$ and $ \sum_{j\in\mathcal{O}} \mathbf{v}_j(t_1) = \sum_{j\in\mathcal{P}} \mathbf{v}_j(t_1).$
\end{definition}
Note that, due to the safety constraint \eqref{eq:safety}, it is only possible to satisfy Definition \ref{def:symmetry} if multiple boids are added to and removed from the neighborhood simultaneously.

\begin{lemma}\label{lma:discontinuous}
    For each boid $i\in\mathcal{A}$, the functions $\mathbf{c}_i(t)$ or $\dot{\mathbf{c}}_i(t)$ are discontinuous at some time $t_1\in\mathbb{R}_{\geq0}$ if and only if $\mathcal{N}_i(t)$ switches at $t=t_1$ and the switch is non-symmetric (Definition \ref{def:symmetry}).
\end{lemma}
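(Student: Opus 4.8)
The plan is to prove both directions of the equivalence by directly computing the one-sided limits of $\mathbf{c}_i$ and $\dot{\mathbf{c}}_i$ across the candidate time $t_1$ and comparing the resulting jumps against Definition \ref{def:symmetry}. The forward implication relies on Lemma \ref{lma:cContinuous} to force a switch, while the converse is a direct calculation.

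First I would pin down the continuity of the individual states so that the neighborhood switch is isolated as the only possible source of discontinuity. By the double-integrator dynamics \eqref{eq:pDynamics}--\eqref{eq:vDynamics} with the bounded control input \eqref{eq:uConstraint}, each velocity $\mathbf{v}_j(t)$ is continuous (its derivative $\mathbf{u}_j$ is bounded) and each position $\mathbf{p}_j(t)$ is continuously differentiable. Hence $\mathbf{p}_j(t_1^-) = \mathbf{p}_j(t_1^+) = \mathbf{p}_j(t_1)$ and $\mathbf{v}_j(t_1^-) = \mathbf{v}_j(t_1^+) = \mathbf{v}_j(t_1)$ for every boid $j$, so any jump in $\mathbf{c}_i$ or $\dot{\mathbf{c}}_i$ at $t_1$ can only come from a change in the index set $\mathcal{N}_i(t)$.

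For the converse ($\Leftarrow$) I would suppose $\mathcal{N}_i(t)$ switches at $t_1$ and split both neighborhoods into the common part $\mathcal{N}_i(t_1^-)\cap\mathcal{N}_i(t_1^+)$ together with the removed set $\mathcal{O}$ and added set $\mathcal{P}$ of Definition \ref{def:symmetry}. Using $\dot{\mathbf{p}}_j = \mathbf{v}_j$, the terms indexed by the common part cancel and the one-sided jumps reduce to
\begin{align}
\mathbf{c}_i(t_1^+) - \mathbf{c}_i(t_1^-) &= \tfrac{1}{k}\Big( \textstyle\sum_{j\in\mathcal{P}} \mathbf{p}_j(t_1) - \sum_{j\in\mathcal{O}} \mathbf{p}_j(t_1) \Big), \\
\dot{\mathbf{c}}_i(t_1^+) - \dot{\mathbf{c}}_i(t_1^-) &= \tfrac{1}{k}\Big( \textstyle\sum_{j\in\mathcal{P}} \mathbf{v}_j(t_1) - \sum_{j\in\mathcal{O}} \mathbf{v}_j(t_1) \Big).
\end{align}
Thus $\mathbf{c}_i$ is continuous at $t_1$ exactly when the position sums over $\mathcal{O}$ and $\mathcal{P}$ agree, and $\dot{\mathbf{c}}_i$ is continuous exactly when the velocity sums agree; both are continuous if and only if both conditions hold, which is precisely a symmetric switch. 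Contrapositively, at least one of $\mathbf{c}_i,\dot{\mathbf{c}}_i$ is discontinuous if and only if the switch is non-symmetric, which gives the converse.

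For the forward implication ($\Rightarrow$), suppose $\mathbf{c}_i$ or $\dot{\mathbf{c}}_i$ is discontinuous at $t_1$. Lemma \ref{lma:cContinuous} forces $\mathcal{N}_i(t)$ to switch at $t_1$, so the decomposition above applies; were the switch symmetric both jumps would vanish, contradicting the assumed discontinuity, hence the switch is non-symmetric. The computation itself is elementary, so I anticipate no deep obstacle. The one point requiring care is the bookkeeping that first fixes continuity of the individual $\mathbf{p}_j$ and $\mathbf{v}_j$ to isolate the discontinuity in the index set, after which the cancellation of the common-neighbor terms and the match with Definition \ref{def:symmetry} are immediate.
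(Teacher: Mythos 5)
Your proof is correct and follows essentially the same route as the paper: Lemma \ref{lma:cContinuous} supplies the ``discontinuity implies a switch'' half, and cancelling the common neighbors reduces continuity of $\mathbf{c}_i(t)$ and $\dot{\mathbf{c}}_i(t)$ to equality of the position and velocity sums over $\mathcal{O}$ and $\mathcal{P}$, i.e., to Definition \ref{def:symmetry}. If anything, your explicit jump formulas are slightly more complete than the paper's contradiction argument, since they also give ``symmetric switch $\Rightarrow$ continuity,'' which is needed to conclude that a discontinuity forces the switch to be non-symmetric and which the paper leaves implicit.
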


\begin{proof}
    First we prove sufficiency. By Lemma \ref{lma:cContinuous}, a discontinuity in $\mathbf{c}_i(t)$ or $\dot{\mathbf{c}}_i(t)$ at $t_1$ implies that $\mathcal{N}_i(t)$ switches at $t_1$.
    
    We prove necessity by contradiction. Let $\mathbf{c}_i(t)$ be continuous and assume $\mathcal{N}_i(t)$ switches non-symmetrically at some time $t_1$. This implies that $\mathbf{c}_i(t_1^-) = \mathbf{c}_i(t_1^+),$  where the superscripts $^+$ and $^-$ correspond to the left and right limits of $t_1$, respectively.
    This implies $\sum_{j\in\mathcal{N}_i(t_1^-)} \mathbf{p}_j(t_1) = \sum_{j\in\mathcal{N}_i(t_1^+)} \mathbf{p}_j(t_1),$ since $\mathbf{p}_l(t)$ is continuous for all $l\in\mathcal{A}$.  
    We may remove the common elements in the last equatoin, which implies
    $ \sum_{j\in\mathcal{O}} \mathbf{p}_j(t_1) = \sum_{j\in\mathcal{P}} \mathbf{p}_j(t_1).$
    The same analysis holds for continuity of $\dot{\mathbf{c}}_i(t)$. This contradicts the hypothesis that the switching is non-symmetric.
\end{proof}

Thus, by Lemma \ref{lma:discontinuous}, for each boid $i\in\mathcal{A}$, a switch in $\mathcal{N}_i(t)$ implies that either $\mathbf{c}_i(t)$ or $\dot{\mathbf{c}}_i(t)$ is discontinuous unless the conditions in Definition \ref{def:symmetry} are satisfied. As Definition \ref{def:symmetry} relies on ideal symmetry conditions, we may assume that a switch will never be symmetric in a real system. Thus, for simplicity, we consider a switch in a boid's neighborhood to always lead to a discontinuity in $\mathbf{c}_i(t)$ or its derivative.

Finally, we present Property \ref{prp:discontinuous} of the optimal trajectory, which describes the impact of each boid's neighborhood on the task constraint.

\begin{property}\label{prp:discontinuous}
    For each boid $i\in\mathcal{A}$ traveling along the task-constrained arc, i.e., following a trajectory that exactly satisfies \eqref{eq:flocking}, if $\mathcal{N}_i(t)$ switches at some time $t_1\in\mathbb{R}_{\geq 0}$, and the switch is non-symmetric (Definition \ref{def:symmetry}), then boid $i$ must exit the constrained arc.
\end{property}
	
\begin{proof}
    We prove Property \ref{prp:discontinuous} by contrapositive. 
	Let $i\in\mathcal{A}$ be any boid in the flock traveling along the task-constrained arc over the interval $[t_1, t_2]\subset\mathbb{R}_{\geq0}$.
	This implies that the optimal trajectory of boid $i$, denoted $\mathbf{p}_i^*(t)$, must satisfy
    \begin{equation} \label{eq:tangency1}
        ||\mathbf{p}_i^*(t) - \mathbf{c}_i(t)||^2 - D^2 = 0, \quad \forall t\in[t_1, t_2].
    \end{equation}
    The optimal trajectory $\mathbf{p}_i^*(t)$ must be continuous. This implies $\mathbf{c}_i(t)$ must be continuous.
    Additionally, the derivative of \eqref{eq:tangency1} must also hold when the task constraint is active, i.e.,
    \begin{equation} \label{eq:tangency2}
        \Big(\mathbf{p}_i^*(t) - \mathbf{c}_i(t)\Big) \cdot \Big(\mathbf{v}_i^*(t) - \dot{\mathbf{c}}_i(t)\Big) = 0, \quad \forall t\in[t_1, t_2].
    \end{equation}
	Therefore, $\dot{\mathbf{c}}_i(t)$ must be continuous.
	This implies that if $\mathbf{c}_i(t)$ or $\dot{\mathbf{c}}_i(t)$ is discontinuous then boid $i$ can not be on the task-constrained arc.
	Thus, by Lemma \ref{lma:discontinuous}, boid $i$ can not be traveling along the constrained arc when $\mathcal{N}_i(t)$ switches, unless the switch is symmetric.
\end{proof}
	
By Property \ref{prp:discontinuous}, we may infer that in any physical system boid $i$ will exit any task-constrained arc whenever $\mathcal{N}_i(t)$ switches at some time $t_1$.

    By Definition \ref{def:neighborhood}, boid $i$ only knows the state information of its $k$-nearest neighbors.
    This information in insufficient to calculate the time of a future trajectory change, $t_1$, or the new neighborhood center state, $\mathbf{c}_i(t_1)$ and $\dot{\mathbf{c}}_i(t_1)$.
    Therefore we must consider the 
    case where $g_i(t_1) > 0$ or $\dot{g}_i(t_1) > 0$ in general.
    This motivates the inclusion of the slack variable $\eta_i^2(t)$ in Problem \ref{prb:flocking}. 

    The behavior induced by this model is non-smooth and nonlinear, and a full analysis of the imposed flock is beyond the scope of this paper. In general, it is necessary to rule out chattering and Zeno behavior at instants when the neighborhood switches.
    Additionally, for each boid $i\in\mathcal{A}$, an analysis of how $\eta_i(t)$ evolves with the system will be necessary to ensure that the flock remains cohesive. In general, it is necessary for $\eta_i(t)$ to be driven to zero in finite time. Otherwise boid $i$ may violate the task constraint indefinitely, leading to flock fragmentation.
    As a step toward analyzing the full system, we analyze the case where the communication topology is fixed and connected in the next section.

\subsection{The Fixed and Connected Topology Case}

    Next, we present several properties of Problem \ref{prb:flocking} for the case that the communication topology is fixed and connected. This ensures that for every $i\in\mathcal{A}$, $\mathbf{c}_i(t)$ is continuously differentiable everywhere by Lemma \ref{lma:discontinuous}. We may therefore also impose $\eta_i(t) = 0$ to make the task a hard constraint.
    In addition, we relax \eqref{eq:vConstraint} and \eqref{eq:uConstraint}, and instead, we only require $\mathbf{v}_i(t)$ and $\mathbf{u}_i(t)$ to be finite everywhere. Generally, imposing state and control limits does not add significant complexity to the problem \cite{Bryson1975AppliedControl}, and these two cases have been thoroughly explored in the literature \cite{Malikopoulos2018,Beaver2020AnAgents}.

First, we present Property \ref{prp:equilibrium}, which describes an optimal solution to Problem \ref{prb:flocking}.
  
\begin{property} \label{prp:equilibrium}
    If velocity consensus is achieved at a time $t_1\in\mathbb{R}_{\geq0}$, and $\eta_i^2(t_1) = 0$ for all $i\in\mathcal{A}$, then the globally optimal solution to Problem \ref{prb:flocking} is to maintain velocity consensus for all $i\in\mathcal{A}$ and for all $t\geq t_1$.
\end{property}
    
\begin{proof}
    Let every boid $i\in\mathcal{A}$ at some time $t_1\in\mathbb{R}_{\geq0}$ move with some consensus velocity $\mathbf{v}_c$ such that no constraint of Problem \ref{prb:flocking} is violated and  $\eta_i^2(t_1) = 0$.
    Next, let $i$ follow the trajectory $\mathbf{u}_j(t) = \mathbf{0}$ for all $t\geq t_1$.
    Then, $\mathbf{v}_i(t) = \mathbf{v}_j(t)$ for all $t\geq t_1$.
    Thus, the vector $\mathbf{s}_{ij}(t)$ is constant, and the safety constraint can never be violated for any $t\geq t_1$.
    This also implies that $\mathbf{v}_i(t) = \dot{\mathbf{c}}_i(t)$, and thus the vector $\mathbf{p}_i(t) - \mathbf{c}_i(t)$ is constant for all $t\geq t_1$.
    This implies that the task constraint can never be violated for any $t\geq t_1$, and also $\eta_i^2(t) = 0$ for all $t \geq t_1$.
\end{proof}

To derive an optimal control policy for Problem \ref{prb:flocking} we may apply Hamiltonian analysis \cite{Bryson1975AppliedControl}. First, we solve for the form of the optimal control policy for each set of constraints that may become active. As $\eta_i(t) = 0$, we have four possible arcs which boid $i\in\mathcal{A}$ may travel along: (1) none of the constraints become active -- $\mathbf{u}_i^*(t) = \mathbf{0}$; (2) boid $i$ moves unconstrained to an interior point -- $\mathbf{u}_i^*(t) = \mathbf{a} t + \mathbf{b}$ \cite{Malikopoulos2018}; (3) boid $i$ activates the safety constraint with some $j\in\mathcal{N}_i$, i.e., $\mathbf{u}_i^*(t) = \mathbf{u}_j^*(t)$ \cite{Beaver2020AnAgents}; and (4) boid $i$ activates the task constraint, i.e., $\mathbf{u}_i^*(t) = \ddot{\mathbf{c}}_i(t)$ \cite{Beaver2020AnAgents},
where $\mathbf{a}_i$ and $\mathbf{b}_i$ are constants of integration. We refer to the above cases as our \emph{optimal motion primitives}. The optimal control policy of boid $i$ is a piecewise function consisting of our four optimal motion primitives which are pieced together while satisfying optimality conditions.
Next, we present a result which characterizes the control input for an agent $i\in\mathcal{A}$, when $i$ activates the safety or task constraint.

\begin{theorem} \label{thm:uContinuous}
    Let a boid $i\in\mathcal{A}$ transition to a task, or safety-constrained arc at some time $t_1\in\mathbb{R}_{\geq0}$ for the fixed-topology case. Then the control input $\mathbf{u}_i(t_1)$ is continuous.
\end{theorem}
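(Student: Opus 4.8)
The plan is to derive the first-order necessary conditions for Problem~\ref{prb:flocking} by Hamiltonian analysis and then isolate the single instant, $t_1$, at which the control could in principle jump. With the speed and input bounds relaxed and $\eta_i(t)=0$, I would form, on an unconstrained arc, the Hamiltonian $H_i = \tfrac{1}{2}\mathbf{u}_i^T\mathbf{u}_i + \boldsymbol{\lambda}_p^T\mathbf{v}_i + \boldsymbol{\lambda}_v^T\mathbf{u}_i$, where $\boldsymbol{\lambda}_p$ and $\boldsymbol{\lambda}_v$ are the costates associated with \eqref{eq:pDynamics} and \eqref{eq:vDynamics}. The stationarity condition $\partial H_i/\partial\mathbf{u}_i=\mathbf{0}$ yields $\mathbf{u}_i(t)=-\boldsymbol{\lambda}_v(t)$, so proving continuity of $\mathbf{u}_i$ at $t_1$ reduces to proving continuity of the velocity costate $\boldsymbol{\lambda}_v$ at $t_1$.

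Next I would adjoin the active constraint at $t_1$ — the task constraint \eqref{eq:flocking} or the safety constraint \eqref{eq:safety} — directly to the Hamiltonian with a multiplier, invoking the maximum principle for state-constrained problems. The key structural observation is that the stationarity condition is unaffected on the constrained arc, because neither \eqref{eq:flocking} nor \eqref{eq:safety} contains $\mathbf{u}_i$; hence the relation $\mathbf{u}_i=-\boldsymbol{\lambda}_v$ holds on both sides of $t_1$, and only a jump in $\boldsymbol{\lambda}_v$ could break continuity of the control.

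The heart of the argument is the costate jump condition at the junction. For an arc entered under a pure state constraint, the costate may be discontinuous, but the jump is constrained to lie along the state gradient of the active constraint, i.e. $\boldsymbol{\lambda}(t_1^-)-\boldsymbol{\lambda}(t_1^+)\propto\partial g_i/\partial\mathbf{x}_i$. I would then compute this gradient and observe that both candidate constraints depend on positions only: $\partial g_i/\partial\mathbf{v}_i=\mathbf{0}$ for the task constraint, and $\partial(4R^2-||\mathbf{s}_{ij}||^2)/\partial\mathbf{v}_i=\mathbf{0}$ for the safety constraint. Consequently the jump can appear only in the position costate $\boldsymbol{\lambda}_p$, while $\boldsymbol{\lambda}_v$ is continuous across $t_1$. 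Combined with $\mathbf{u}_i=-\boldsymbol{\lambda}_v$, this gives continuity of $\mathbf{u}_i(t_1)$ for either constrained arc.

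I expect the main obstacle to be justifying the jump condition rigorously rather than computing it. I must confirm that the fixed-and-connected-topology assumption makes $\mathbf{c}_i(t)$ and $\dot{\mathbf{c}}_i(t)$ continuously differentiable, so the constrained arc and its entry are well posed (Lemma~\ref{lma:discontinuous}), and I must cite the correct state-constrained maximum principle so that the jump is guaranteed to lie along $\partial g_i/\partial\mathbf{x}_i$ with no spurious velocity component. As a secondary check, continuity of $\mathbf{v}_i$ and $\dot{\mathbf{c}}_i$ forces $\dot g_i(t_1)=0$ (and likewise for the safety constraint), confirming that the arc is entered tangentially and that no hidden state discontinuity can reintroduce a control jump at $t_1$.
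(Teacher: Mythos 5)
Your proposal follows the right general machinery (Hamiltonian analysis, stationarity $\mathbf{u}_i=-\boldsymbol{\lambda}_v$, junction conditions at $t_1$), but the key step is wrong. The claim that the costate jump must lie along $\partial g_i/\partial\mathbf{x}_i$, and therefore has no velocity component because \eqref{eq:flocking} and \eqref{eq:safety} depend only on positions, holds only for \emph{first-order} state constraints. Both constraints here are pure state constraints of order two under the double-integrator dynamics \eqref{eq:pDynamics}--\eqref{eq:vDynamics}: the control appears only in $\ddot{g}_i$. For such constraints, the entry-point condition involves the full tangency vector $N_i=(g_i,\dot{g}_i)^T$ (this is exactly \eqref{eq:N}), and the costate jump is $\boldsymbol{\lambda}_i(t_1^-)-\boldsymbol{\lambda}_i(t_1^+)=(\partial N_i/\partial\mathbf{x}_i)^T\boldsymbol{\pi}$. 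Since $\dot{g}_i = 2\,\mathbf{s}_i\cdot(\mathbf{v}_i-\dot{\mathbf{c}}_i)$, we have $\partial\dot{g}_i/\partial\mathbf{v}_i = 2\,\mathbf{s}_i\neq\mathbf{0}$ at the junction (where $\|\mathbf{s}_i\|=D$), which is visible as the nonzero $\mathbf{s}_i^T$ block in the second row of \eqref{eq:partialNpartialX}. Consequently $\boldsymbol{\lambda}_v$ may jump by $\pi_2\,\mathbf{s}_i$, and your argument cannot conclude that $\mathbf{u}_i$ is continuous; it would at best show the jump is proportional to $\mathbf{s}_i$ with an undetermined multiplier. (A secondary issue: under the Bryson--Ho direct-adjoining formulation for higher-order constraints, one adjoins $\ddot{g}_i$, which does contain $\mathbf{u}_i$, so stationarity on the constrained arc reads $\mathbf{u}_i=-\boldsymbol{\lambda}_v-2\mu\,\mathbf{s}_i$ rather than $\mathbf{u}_i=-\boldsymbol{\lambda}_v$.)

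The paper closes this gap with an ingredient your proposal omits: the Hamiltonian jump condition. In addition to the costate condition \eqref{eq:lambdaJump}, the Hamiltonian must satisfy \eqref{eq:hJump}, whose right-hand side is $\boldsymbol{\pi}\cdot\partial N_i/\partial t$ with $\partial N_i/\partial t$ given by \eqref{eq:partialNpartialT} (nonzero precisely because $\mathbf{c}_i(t)$ makes the constraint explicitly time-dependent). Substituting the costate relation and the stationarity conditions into \eqref{eq:hJump}, the multiplier terms cancel and what remains is $\|\mathbf{u}_i(t_1^+)\|^2+\|\mathbf{u}_i(t_1^-)\|^2-2\,\mathbf{u}_i(t_1^-)\cdot\mathbf{u}_i(t_1^+)=\|\mathbf{u}_i(t_1^+)-\mathbf{u}_i(t_1^-)\|^2=0$, forcing continuity of the control even though the costate itself may be discontinuous. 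If you want to repair your proof, you must either carry out this combined costate--Hamiltonian computation, or otherwise prove that the multiplier $\pi_2$ attached to the tangency condition $\dot{g}_i(t_1)=0$ vanishes; the latter does not follow from the position-only structure of the constraints.
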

\begin{proof}
    Let boid $i\in\mathcal{A}$ transition to a task-constrained arc at some $t=t_1$.
    In this case, boid $i$ must satisfy the tangency conditions (see \cite{Bryson1975AppliedControl}, pp. 101)
    \begin{align} \label{eq:N}
        N_i(t_1, \mathbf{x}_i(t_1)) = \begin{bmatrix}
            \mathbf{s}_i(t_1)\cdot\mathbf{s}_i(t_1) - D^2 \\
            \dot{\mathbf{s}}_i(t_1)\cdot\mathbf{s}_i(t_1)
        \end{bmatrix} = \mathbf{0}, \\
        \ddot{\mathbf{s}}_i(t_1)\cdot\mathbf{s}_i(t_1) + \dot{\mathbf{s}}_i(t_1)\cdot\dot{\mathbf{s}}_i(t_1) = 0. \label{eq:constraint}
    \end{align}
    The vectors $\mathbf{s}_i(t)$ and $\dot{\mathbf{s}}_i(t)$ are functions of $\mathbf{c}_i(t)$ and $\dot{\mathbf{c}}_i(t)$, which are known functions of time. Therefore,
    \begin{equation} \label{eq:partialNpartialT}
        \frac{\partial N_i(t, \mathbf{x}_i(t))}{\partial t} = 
        \begin{bmatrix}
        -2\,\dot{\mathbf{c}}_i(t) \cdot\mathbf{s}_i(t) \\
        -\dot{\mathbf{c}}_i(t)\cdot\dot{\mathbf{s}}_i(t) - \ddot{\mathbf{c}}_i(t)\cdot\mathbf{s}_i(t)
        \end{bmatrix}.
    \end{equation}
    In addition, we have
    \begin{equation} \label{eq:partialNpartialX}
        \frac{\partial N_i(t, \mathbf{x}_i(t))}{\partial \mathbf{x}_i} = \begin{bmatrix}
        2\,\mathbf{s}_i^T(t), \mathbf{0} \\ \dot{\mathbf{s}}_i^T(t), \mathbf{s}_i^T(t)
        \end{bmatrix}.
    \end{equation}
    Finally, at time $t_1$, the optimality conditions are \cite{Bryson1975AppliedControl}
    \begin{align}
        \boldsymbol{\lambda}_i^T(t_1^-) = \boldsymbol{\lambda}_i^T(t_1^+) = \boldsymbol{\pi}\cdot \frac{\partial N_i(t, \mathbf{x}_i(t)}{\partial \mathbf{x}_i(t)}\Bigg|_{t_1},\label{eq:lambdaJump}\\
        \frac{1}{2}||\mathbf{u}_i(t_1^+)||^2 + \boldsymbol{\lambda}_i(t_1^+)\cdot\dot{\mathbf{x}}_i(t_1^+) - \frac{1}{2}||\mathbf{u}_i(t_1^-)||^2 \notag\\ 
        - \boldsymbol{\lambda}_i(t_1^-)\cdot\dot{\mathbf{x}}_i(t_1^-) = \boldsymbol{\pi}\cdot \frac{\partial N_i(t, \mathbf{x}_i(t)}{\partial t}\Bigg|_{t=t_1}, \label{eq:hJump}
    \end{align}
    where $\boldsymbol{\lambda}_i(t)$ is the state covector and $\boldsymbol{\pi}_i$ is a $2\times1$ constant vector.
    Note the constraint does not appear in the Hamiltonian \eqref{eq:hJump} since it becomes zero at $t_1^+$ through \eqref{eq:constraint}.
    Substituting \eqref{eq:partialNpartialX}-\eqref{eq:lambdaJump} into \eqref{eq:hJump} and simplifying yields
    \begin{equation}
        ||\mathbf{u}_i(t_1^+)||^2 + ||\mathbf{u}_i(t_1^-)||^2 - 2\,\mathbf{u}_i(t_1^-)\cdot\mathbf{u}_i(t_1^+) = 0,
    \end{equation}
    which has the real solution $\mathbf{u}_i(t_1^-) = \mathbf{u}_i(t_1^+)$. Thus $\mathbf{u}_i(t_1)$ is continuous.
    \\
    The proof for the safety constraint is identical and thus we omitted it.
\end{proof}

\begin{corollary} \label{cor:allContinuous}
The optimal control input of each boid $i\in\mathcal{A}$ is continuous everywhere for the fixed topology case.
\end{corollary}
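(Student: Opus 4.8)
The plan is to reduce the claim to a finite set of junction instants and invoke Theorem~\ref{thm:uContinuous} at most of them. First I would note that on the interior of any single arc each motion primitive is a continuous function of time: $\mathbf{u}_i^*(t)=\mathbf{0}$ and $\mathbf{u}_i^*(t)=\mathbf{a}t+\mathbf{b}$ are continuous by inspection, while the task primitive $\mathbf{u}_i^*(t)=\ddot{\mathbf{c}}_i(t)=\tfrac{1}{k}\sum_{j\in\mathcal{N}_i}\mathbf{u}_j(t)$ and the safety primitive $\mathbf{u}_i^*(t)=\mathbf{u}_j^*(t)$ are continuous \emph{provided} the relevant neighbors' controls are. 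Since the optimal policy is a finite concatenation of such primitives, $\mathbf{u}_i(t)$ can only fail to be continuous at the junction instants where the active constraint set of some boid changes; it therefore suffices to establish continuity at every such instant.

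Next I would classify each junction as (i) an entry onto a task- or safety-constrained arc, (ii) an exit from such an arc onto an unconstrained arc, or (iii) a junction between two unconstrained arcs. Case (i) is precisely Theorem~\ref{thm:uContinuous}, which yields $\mathbf{u}_i(t_1^-)=\mathbf{u}_i(t_1^+)$. For case (iii), the costate $\boldsymbol{\lambda}_i(t)$ evolves continuously across an unconstrained interval and the control is the affine function determined by the adjacent boundary data, so no corner can arise. For case (ii), I would re-run the tangency and jump-condition computation of Theorem~\ref{thm:uContinuous} at the exit instant: the constraint is tangent up to the exit point, so it again contributes nothing to the Hamiltonian via \eqref{eq:constraint}, and the relations \eqref{eq:lambdaJump}--\eqref{eq:hJump} collapse to the same quadratic $||\mathbf{u}_i(t_1^+)||^2+||\mathbf{u}_i(t_1^-)||^2-2\,\mathbf{u}_i(t_1^-)\cdot\mathbf{u}_i(t_1^+)=0$, whose unique real solution forces continuity.

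The substantive difficulty is that the task and safety primitives couple $\mathbf{u}_i$ to its neighbors' controls, so ``continuity on an arc'' and the individual junction conditions are not self-contained: boid $i$'s control on a task arc is continuous only if $\tfrac{1}{k}\sum_{j\in\mathcal{N}_i}\mathbf{u}_j$ is, and likewise along a safety arc. I would close this coupling by arguing over the whole flock at once. Collecting the finite set of all junction times across every boid and processing them in temporal order, at each such instant every boid is in case (i), (ii), or (iii); each case preserves the equality of one-sided control limits for that boid independently of the others, so no boid can introduce a jump, and the connected, fixed topology guarantees the one-sided limits assemble consistently into a globally continuous profile $\{\mathbf{u}_i(t)\}_{i\in\mathcal{A}}$. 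The main obstacle I anticipate is exactly this inter-agent coupling---ruling out a self-consistent simultaneous jump among a cycle of mutually task- or safety-constrained boids---together with the implicit assumption that junctions are finite in number (i.e., chattering and Zeno behavior are absent), which the fixed-topology restriction is meant to secure.
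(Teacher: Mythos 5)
Your proposal follows essentially the same route as the paper's proof: reduce continuity to the junction instants, invoke Theorem~\ref{thm:uContinuous} at entries onto task- or safety-constrained arcs, and use costate/Hamiltonian continuity together with the optimality (Euler--Lagrange) conditions at the remaining junctions onto unconstrained arcs. The extra issues you flag---the inter-agent coupling through the task and safety primitives, and the finiteness of junctions (no chattering/Zeno)---are genuine points that the paper's terse proof leaves implicit, but addressing them refines rather than changes the underlying argument.
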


\begin{proof}
    By Theorem \ref{thm:uContinuous}, the control input $\mathbf{u}_i$ is continuous when boid $i$ enters a task or safety-constrained arc. When $i$ enters an unconstrained arc the covectors $\boldsymbol{\lambda}_i(t)$ and Hamiltonian are continuous, and the vector $N_i(t) = 0$ in \eqref{eq:N}. From the optimality conditions and Euler-Lagrange equations it is straightforward to show that the control input is continuous.
\end{proof}

Finally we present a proof of convergence to velocity consensus for the fixed and connected topology case trough Lemma \ref{lma:taskActive} and Theorem \ref{thm:convergence}.

\begin{lemma} \label{lma:taskActive}
    If a boid $i\in\mathcal{A}$ satisfies $||\mathbf{v}_i(t)|| > ||\dot{\mathbf{c}}_i(t)||$ while traveling along an unconstrained arc, then there exists some $t_1\in\mathbb{R}_{\geq0}$ such that the task constraint becomes active at $t_1$.
\end{lemma}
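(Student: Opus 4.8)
The plan is to argue by contradiction: suppose that while boid $i$ coasts along the unconstrained arc the task constraint never becomes active, and show that the hypothesis $\|\mathbf{v}_i(t)\| > \|\dot{\mathbf{c}}_i(t)\|$ then forces the relative displacement to grow without bound, which is impossible since the task constraint caps it at $D$. First I would introduce the relative displacement $\mathbf{r}_i(t) = \mathbf{p}_i(t) - \mathbf{c}_i(t)$, so that the task constraint \eqref{eq:flocking} (with $\eta_i = 0$) is exactly $\|\mathbf{r}_i(t)\| \le D$, and observe that on an unconstrained arc the energy-minimizing primitive is $\mathbf{u}_i^*(t) = \mathbf{0}$, so $\mathbf{v}_i(t)$ is constant; call it $\mathbf{v}$ with $\|\mathbf{v}\| > 0$ by the hypothesis.

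The key step is a projection argument onto the fixed direction $\hat{\mathbf{n}} = \mathbf{v}/\|\mathbf{v}\|$. Defining $\phi(t) = \mathbf{r}_i(t)\cdot\hat{\mathbf{n}}$ and differentiating gives $\dot{\phi}(t) = (\mathbf{v} - \dot{\mathbf{c}}_i(t))\cdot\hat{\mathbf{n}} = \|\mathbf{v}\| - \dot{\mathbf{c}}_i(t)\cdot\hat{\mathbf{n}}$. By the Cauchy--Schwarz inequality $\dot{\mathbf{c}}_i(t)\cdot\hat{\mathbf{n}} \le \|\dot{\mathbf{c}}_i(t)\|$, so the hypothesis yields $\dot{\phi}(t) \ge \|\mathbf{v}\| - \|\dot{\mathbf{c}}_i(t)\| > 0$ throughout the arc. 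Integrating, $\phi(t) - \phi(t_0) \ge \int_{t_0}^{t}\big(\|\mathbf{v}\| - \|\dot{\mathbf{c}}_i(\tau)\|\big)\,d\tau$, a strictly increasing function of $t$. Since a projection onto a unit vector can never exceed the norm, $\|\mathbf{r}_i(t)\| \ge \phi(t)$, so once $\phi(t)$ surpasses $D$ the constraint $\|\mathbf{r}_i(t)\| \le D$ is violated. Hence the unconstrained arc cannot persist; the boid must transition off it precisely when $\|\mathbf{r}_i(t_1)\| = D$, i.e., the task constraint becomes active at that $t_1$.

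The main obstacle is making the divergence rigorous: the pointwise gap $\|\mathbf{v}\| - \|\dot{\mathbf{c}}_i(\tau)\|$ is positive but could in principle decay toward zero, so I would need to verify that $\phi(t)$ actually exceeds $D$ in finite time rather than merely increasing while staying bounded. I expect to close this either by bounding $\|\dot{\mathbf{c}}_i\|$ away from $\|\mathbf{v}\|$ (using that the neighbors' velocities are finite and the center, being their average, cannot asymptotically match a strictly faster constant velocity) or by observing that any maintained positive gap drives $\phi$ past $D$ over a sufficiently long horizon. A secondary point is the assumption that $\mathbf{v}_i$ is constant on the arc; if one admits the more general unconstrained primitive $\mathbf{u}_i^*(t) = \mathbf{a}t + \mathbf{b}$, the direction $\hat{\mathbf{n}}$ becomes time-varying and the projection computation acquires an extra $\dot{\hat{\mathbf{n}}}$ term, which I would handle by working directly with $\tfrac{d}{dt}\|\mathbf{r}_i\|^2$ and the maintained velocity gap rather than a fixed-direction projection.
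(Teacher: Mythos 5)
Your proposal is correct and follows essentially the same route as the paper's proof: both define the relative displacement $\mathbf{p}_i(t) - \mathbf{c}_i(t)$, use the strict velocity gap to force that displacement to grow, and close the finite-time question by appealing to the polynomial speed profiles of the motion primitives, so that $\dot{\mathbf{c}}_i(t)$ cannot asymptotically approach $\mathbf{v}_i(t)$ --- exactly the resolution you anticipate for your flagged ``main obstacle.'' Your fixed-direction projection step is simply a more explicit rendering of the paper's terse reverse-triangle-inequality step (and arguably patches its weakness, since positive relative speed alone does not force $\|\mathbf{s}_i(t)\|$ to grow), but the skeleton of the argument is the same.
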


\begin{proof}
    Let $\mathbf{s}_{i}(t) = \mathbf{p}_i(t) - \mathbf{c}_i(t)$.
    By the triangle inequality $||\mathbf{\dot{s}}_i(t)|| > 0$.
    The speed profile imposed by all of our motion primitives is a polynomial \cite{Beaver2020AnAgents}, therefore $\mathbf{c}_i(t)$ cannot asymptotically approach $\mathbf{v}_i(t)$.
    Thus, there exists some finite $t_1\in\mathbb{R}_{\geq0}$ such that $||\mathbf{s}_i(t_1)|| = D$.
\end{proof}

\begin{theorem} \label{thm:convergence}
    For each boid $i\in\mathcal{A}$, if
    the parameter $D$ is large enough that the boids do not always follow safety-constrained trajectories, then there exists some time $t_1$ such that all boids $j\in\mathcal{N}_i$ achieve velocity consensus.
\end{theorem}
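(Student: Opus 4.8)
The plan is to prove convergence through a Lyapunov/invariance argument built on the speed of the fastest boid, $V(t) = \max_{i\in\mathcal{A}} \lVert \mathbf{v}_i(t)\rVert$, showing it is driven down to the common consensus speed while directional disagreement is simultaneously annihilated. First I would establish \emph{boundedness}: since the topology is fixed and connected and $\eta_i(t)=0$ makes \eqref{eq:flocking} a hard constraint, every boid obeys $\lVert\mathbf{p}_i(t)-\mathbf{c}_i(t)\rVert\le D$ for all $t$, and chaining this inequality along a spanning subgraph bounds every relative displacement $\mathbf{s}_{ij}(t)$ uniformly in time. Because $\mathbf{s}_{ij}(t)=\mathbf{s}_{ij}(t_i^0)+\int_{t_i^0}^{t}\bigl(\mathbf{v}_j(\tau)-\mathbf{v}_i(\tau)\bigr)\,d\tau$, uniform boundedness of the left-hand side forces the integrated velocity disagreement to stay bounded, which is the key lever against persistent disagreement.

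Next I would show that disagreement forces the task constraint to activate. Let $i^\star$ attain $V(t)$ at a given instant. Since $\dot{\mathbf{c}}_{i^\star}(t)$ is the average of its neighbors' velocities, each of magnitude at most $V(t)$, the triangle inequality gives $\lVert\dot{\mathbf{c}}_{i^\star}(t)\rVert\le V(t)=\lVert\mathbf{v}_{i^\star}(t)\rVert$. Whenever this inequality is strict, Lemma \ref{lma:taskActive} forces boid $i^\star$ onto its task-constrained arc in finite time; the hypothesis that $D$ is large enough that the boids are not permanently safety-constrained ensures this arc is actually available rather than pre-empted by \eqref{eq:safety}. On the task arc the tangency condition \eqref{eq:tangency2} annihilates the radial component of $\mathbf{v}_{i^\star}-\dot{\mathbf{c}}_{i^\star}$, tethering the fastest boid to its slower-moving neighborhood center and preventing $V(t)$ from growing; this is the dissipation mechanism.

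Finally I would combine these facts into an invariance argument. Because each boid is energy-minimizing it applies $\mathbf{u}_i=\mathbf{0}$ (primitive~1) whenever feasible and expends control effort only on a constrained arc, while Corollary \ref{cor:allContinuous} guarantees the resulting input is continuous, so no chattering or Zeno accumulation can occur at arc transitions. I would argue that $V(t)$ is non-increasing and stationary only on a set where no boid expends energy; on that set velocities are constant and relative displacements remain bounded, which by Property \ref{prp:equilibrium} is exactly velocity consensus. A LaSalle-type invariance argument then yields a finite $t_1$ at which the neighborhood of each $i$, and hence by connectivity the whole flock, is at velocity consensus.

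I expect the main obstacle to be the hybrid, switched nature of the dynamics, sharpened by a boundary case in the estimate above. The candidate $V(t)$ is only weakly decreasing — constant on unconstrained arcs — so a strict-Lyapunov argument is unavailable and an invariance principle is required. Worse, equality $\lVert\dot{\mathbf{c}}_{i^\star}(t)\rVert = V(t)$ holds precisely when the neighbors of $i^\star$ share a common velocity of magnitude $V(t)$ that may still differ in \emph{direction} from $\mathbf{v}_{i^\star}(t)$; there the strict hypothesis of Lemma \ref{lma:taskActive} fails and $V(t)$ need not strictly decrease. This is the persistent ``circling'' scenario, in which boids orbit their centers on the task arc with a permanently tangential relative velocity so that speeds equalize but headings never do. Excluding such limit cycles — and doing so despite the decentralized structure, where the evolution is a coupled family of per-boid optimizations rather than a single joint minimization — is the crux; the polynomial speed profiles underlying Lemma \ref{lma:taskActive} together with the control continuity of Corollary \ref{cor:allContinuous} are the tools I would lean on to close this gap.
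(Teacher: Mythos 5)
Your skeleton is essentially the paper's: both arguments run on Lemma \ref{lma:taskActive} applied to extremal-speed boids, use the hard task constraint under the fixed topology, and dispose of the safety-constrained case via the motion primitive $\mathbf{u}_i = \mathbf{u}_j$ (either the boid exits the safety arc and Lemma \ref{lma:taskActive} re-applies, or the two velocities are already matched). The paper's bookkeeping is slightly different — it tracks the set $M(t)$ of boids with $||\mathbf{v}_i(t)|| > ||\dot{\mathbf{c}}_i(t)||$ and the minimum-speed boid $l(t)$, and squeezes both extremes toward the center speeds through a sequence of constraint activations — but that is the same mechanism as your $V(t)$ argument. The genuine gap is exactly the one you name and then decline to close: the regime where $||\mathbf{v}_i(t)|| = ||\dot{\mathbf{c}}_i(t)||$ for every boid while headings still disagree (your ``circling'' scenario). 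Lemma \ref{lma:taskActive} is silent there, so your argument as written delivers at best speed consensus, and invoking a LaSalle-type principle does not substitute for actually excluding such invariant sets in a switched, decentralized system. The paper devotes its second step precisely to this regime: select $j = \arg\min_{i\in\mathcal{A}} ||\mathbf{v}_i(t)||$; since $\dot{\mathbf{c}}_j(t)$ is an average of neighbor velocities and $j$ is globally slowest, conclude $||\mathbf{v}_j|| = ||\mathbf{v}_k||$ for all $k\in\mathcal{N}_j$, propagate this recursively through the connected topology, and then align directions to get $\mathbf{v}_i(t) = \dot{\mathbf{c}}_i(t)$ for all $i$. One may debate how airtight that step is (it leans on the equality case of the triangle inequality), but it is the content your proposal is missing; ``the tools I would lean on'' is not a proof.

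Two secondary defects. First, monotonicity of $V(t) = \max_{i\in\mathcal{A}}||\mathbf{v}_i(t)||$ is asserted rather than proven: among the four optimal motion primitives, the interior-point arc $\mathbf{u}_i^*(t) = \mathbf{a}t + \mathbf{b}$ can increase a boid's speed, and on a task-constrained arc $\mathbf{u}_i^*(t) = \ddot{\mathbf{c}}_i(t)$, so the boid's speed tracks its center rather than being frozen; your claim that energy minimization means $\mathbf{u}_i = \mathbf{0}$ ``whenever feasible'' does not cover these arcs. Second, Property \ref{prp:equilibrium} does not say what you cite it for — it states that \emph{maintaining} an already-achieved consensus is globally optimal, not that constant velocities plus bounded relative displacements characterize consensus; the implication you need actually follows from your own boundedness observation (distinct constant velocities force $||\mathbf{s}_{ij}(t)||\to\infty$, contradicting the chained task constraints), so the logic is salvageable but the attribution is wrong. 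Relatedly, LaSalle yields asymptotic convergence rather than a finite $t_1$, though the paper's own proof concludes only asymptotic consensus as well.
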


\begin{proof}
    First, we consider the case where the boids never activate the safety constraint. 
    %
    Let $M(t) = \{i\in\mathcal{A} ~:~ ||\mathbf{v}_i(t)|| > ||\dot{\mathbf{c}}_i(t)   \}$.
    %
    By Lemma \ref{lma:taskActive} there exists some finite time $t_1\in\mathbb{R}_{\geq0}$ such that boid $m\in M(t)$ activates its task constraint.
    Then, until $||\mathbf{v}_i(t)|| \leq ||\dot{\mathbf{c}}_i(t)||$, we may generate a sequence $\{t_n\}, n\in\mathbb{N}$, such that $||\mathbf{v}_{m(t_n)}(t_{n})|| > ||\dot{\mathbf{c}}_{m(t_{n+1})}(t_{n+1}) || = ||\mathbf{v}_{m(t_n)}(t_{n+1})||$ for all $m\in M(t)$.
    Let $l(t) = \arg\min_{i\in\mathcal{A}} ||\mathbf{v}_i(t)||$.
    Following the same procedure, there exists a sequence $\{t_n\}, n\in\mathbb{N}$, such that $||\mathbf{v}_l(t_{n})|| < ||\mathbf{v}_m(t_{n+1})||$ until $||\mathbf{v}_i(t)|| \geq ||\dot{\mathbf{c}}_i(t)||$. Thus, each boid $i\in\mathcal{A}$ must satisfy $||\mathbf{v}_i(t)|| = ||\dot{\mathbf{c}}_i(t)||$ asymptotically. 
    
    Next, let $||\mathbf{v}_i(t)|| = ||\dot{\mathbf{c}}_i(t)||$ for all $i\in\mathcal{A}$. Select a boid $j$ which satisfies $j = \arg\min_{j\in\mathcal{A}}||\mathbf{v}_j(t)||$. As $||\mathbf{v}_j(t)|| = ||\dot{\mathbf{c}}_j(t)||$ and boid $j$ has the minimum speed in $\mathcal{A}$, it must be true that $||\mathbf{v}_j|| = ||\mathbf{v}_k||$ for all $k\in\mathcal{N}_j$. As the agent topology is connected we may recursively apply this reasoning to find $||\mathbf{v}_i(t)|| = ||\mathbf{v}_j(t)||$ for all $i,j\in\mathcal{A}$.
    Following similar logic it can be shown that $\mathbf{v}_i(t) = \dot{\mathbf{c}}_i(t)$, and thus velocity consensus is achieved asymptotically.

    Finally, let $t^0$ be the time that boid $i\in\mathcal{A}$ plans its trajectory, and allow the safety constraint to become active at some time $t_1 > t^0$ for another boid $j\in\mathcal{N}_i(t^0)$.
    By definition of the optimal motion primitives, $\mathbf{u}_j(t) = \mathbf{u}_i(t)$ for the duration that the constraint is active. Thus, either agent $i$ eventually exits the safety-constrained arc and Lemma \ref{lma:taskActive} holds, or $\mathbf{v}_i(t) = \mathbf{v}_j(t)$. If $||\mathbf{v}_i(t)|| \neq ||\dot{\mathbf{c}}_i(t)||$ for $t > t_1$, then Lemma \ref{lma:taskActive} holds.
\end{proof}

Theorem \ref{thm:convergence} guarantees that any flock with a fixed topology will achieve velocity consensus. By the design of the task constraint \eqref{eq:constraint}, any two boids $i,j\in\mathcal{A}$ will be contained within a ball of diameter $N\cdot D$ centered on the flock. 

Supplementary information and simulation results of the proposed flocking controller can be found at: https://sites.google.com/view/ud-ids-lab/cdflock.

\section{Concluding Remarks} \label{sec:conclusion}

In this paper, we proposed a set of flocking rules under the constraint-driven paradigm for multi-robot systems.
We translated these rules into an optimal control problem and gave several properties of the optimal solution.
In addition, we motivated the inclusion of a time-varying slack variable in the formulation and discussed the challenges of planning trajectories in multi-agent problems with partial state observation.
We listed the set of optimal control motion primitives and proved that the optimal control policy is a continuous function. We also showed that the flock will achieve velocity consensus under a fixed topology.

A direction of future research include the extension of Theorem \ref{thm:convergence} to cover the dynamic topology case. It is likely that a methodology similar to \cite{Tanner2007} can be used to prove that velocity consensus is achieved despite the discontinuous behavior that appears in the dynamic case.
Sufficient conditions for the slack variable to guarantee flocking behavior is an area of ongoing research, as well as analyzing the trade-off between energy consumption and flock cohesion by the parameter $\alpha_i$. Finally, the inclusion of environmental obstacles into Problem \ref{prb:flocking} and an additional task constraint to influence the flock's motion require further investigation.

\bibliographystyle{IEEEtran}
\bibliography{mendeley, IDS_Pubs, BibliographyFull}

\end{document}